\DeclareMathOperator{\Cov}{Cov}
\def\ttabular{%
\hbox\bgroup
\let\\\cr
\def\rulea{\ifnum\rowc=\@ne \hrule height 1.3pt \fi}
\def\ruleb{
\ifnum\rowc=1\hrule height 1.3pt \else
\ifnum\rowc=6\hrule height \heavyrulewidth 
   \else \hrule height \lightrulewidth\fi\fi}
\valign\bgroup
\global\rowc\@ne
\rulea
\hbox to 10em{\strut \hfill##\hfill}%
\ruleb
&&%
\global\advance\rowc\@ne
\hbox to 10em{\strut\hfill##\hfill}%
\ruleb
\cr}
\def\endttabular{%
\crcr\egroup\egroup}
\title{Cluster-Robust Estimators for Bivariate Mixed-Effects Meta-Regression}
\author{Thilo Welz\footnote{Correspondence: thilo.welz@tu-dortmund.de}, Wolfgang Viechtbauer, Markus Pauly}
\date{\today}
\newcommand{\bs}{\boldsymbol}
\newtheorem{theorem}{Theorem}
\newcommand{\R}{\mathbb{R}}
\begin{document}

\maketitle
\thispagestyle{empty}
\begin{abstract}
Meta-analyses frequently include trials that report multiple effect sizes based on a common set of study participants. These effect sizes will generally be correlated. Cluster-robust variance-covariance estimators are a fruitful approach for synthesizing dependent effects. However, when the number of studies is small, state-of-the-art robust estimators can yield inflated Type 1 errors. We present two new cluster-robust estimators, in order to improve small sample performance. For both new estimators the idea is to transform the estimated variances of the residuals using only the diagonal entries of the hat matrix. Our proposals are asymptotically equivalent to previously suggested cluster-robust estimators such as the bias reduced linearization approach. We apply the methods to real world data and compare and contrast their performance in an extensive simulation study. We focus on bivariate meta-regression, although the approaches can be applied more generally.
\end{abstract}


\textit{Keywords}: Meta-regression, multivariate analysis, cluster-robust estimators, Monte-Carlo-simulation

\newpage
\section{Introduction}

In psychometric and medical research, studies frequently report multiple dependent outcomes. These effects can be synthesized across studies, while incorporating study level moderators, via multivariate meta-regression \citep{berkey1998meta}. This is a more sophisticated approach than averaging the effects within studies to create aggregate effects, which are then synthesized.  A fruitful approach to achieve reliable inference in the case of a multivariate meta-regression is to use a cluster-robust (CR) variance-covariance estimator \citep{hedges2010robust}. Robust estimators are designed to account for potential model misspecification. They can handle dependent effect size estimates and heteroscedastic model errors. A frequent problem in multivariate meta-analysis models is that it is difficult to impossible to compute the variance-covariance matrix of the vector of effect estimates. This is because trials frequently report neither the sampling covariances between study effects nor individual patient data (IPD). This is where CR estimators come into play: They have multiple advantages, such as providing consistent standard errors and asymptotically valid tests without requiring restrictive assumptions regarding the (correlation) structure of the model errors.

Cluster-robust estimators are an extension of heteroscedasticity consistent $(HC)$ estimators. $HC$ estimators, proposed by \cite{white1980heteroskedasticity} and later extended in \cite{cribari2004asymptotic} and \cite{cribari2007inference}, were first proposed in the meta-analytic literature by \cite{sidik2005note}. They have been examined and applied for use in ANCOVA \citep{zimmermann2019}, ordinary least squares regression \citep{hayes2007} and  mixed-effect meta-regression \citep{hedges2010robust,viechtbauer2015comparison,welz2020simulation}. When trials report multiple effects stemming from the same study participants, their clustered, i.e. correlated nature should be accounted for. This is where CR estimators come in. The original formulations of both HC and CR estimators have been shown to possess a downward bias for variance components, as well as yielding highly inflated Type 1 errors of respective test procedures in case of a small number of studies/clusters \citep{viechtbauer2015comparison,tipton2015smallPustejovsky,welz2020simulation}. Therefore it is recommended to instead use one of various improvements that have been suggested. We discuss some of these, such as the bias reduced linearization approach and $CR_3$ as introduced in \cite{bell2002bias}, as well as two new proposals in the chapter on cluster-robust estimators. These can be applied generally for multivariate meta-regression, but we focus specifically on the bivariate case.


First, we present the statistical model, as well as tests and confidence regions for the model coefficients in Section 2. In Section 3, we describe multiple CR estimators, including two new suggestions. In Section 4, we conduct a real world data analysis. Section 5 describes the design and results of our simulation study. We close with a discussion of the results and an outlook for future research (Section 6).

\section{The Set-up}
\label{sec:setup}

The usual multivariate mixed-effects meta-regression model \citep{jackson2011multivariate} is given by

\begin{equation}
\label{model_classic}
\boldsymbol{Y_i} = \boldsymbol{X_{i} \beta + u_i + \varepsilon_i}, \ i=1,\ldots,l,
\end{equation}
where $k$ is the number of independent studies, $\bs{\beta} \in \mathbb{R}^q$ is a vector of coefficients and $\bs{X_i}$ a $p_i \times q$ design matrix of study-level covariates. In the following we will assume that there are $p$ effects of interest per study, but only $p_i \leq p$ effects are observed (reported) in study $i$, i.e. $\bs{Y_i} \in \mathbb{R}^{p_i}$. Furthermore, $\boldsymbol{u_i}$ is a random effect that is typically assumed to be multivariate normally distributed with $\boldsymbol{u_i} \sim \mathcal{N}(\boldsymbol{0,T_i})$ and $\boldsymbol{\varepsilon_i}$ is the \textit{within-study} error with $\boldsymbol{\varepsilon_i} \sim \mathcal{N}(\boldsymbol{0,V_i})$. With $\boldsymbol{T_i}$ we refer to the $p_i \times p_i$ submatrix of the matrix $\boldsymbol{T} = \begin{pmatrix}
\tau_1^2 & \tau_{12}\\
\tau_{12} & \tau_2^2
\end{pmatrix}$, denoting the $p \times p$ \textit{between}-study variance-covariance matrix (under complete data). $\bs{V_i}$ refers to the corresponding $p_i \times p_i$ \textit{within}-study variance-covariance matrix.
A typical example would be a compound symmetry structure for $\bs{T_i}$, see Section \ref{sim_study} below. We rewrite model (\ref{model_classic}) in matrix notation as

\begin{equation}
\boldsymbol{Y = X\beta + u + \varepsilon},
\end{equation}

\noindent
with $\bs{\beta} \in \mathbb{R}^q$, $\bs{Y = (Y_1',\ldots,Y_K')'}$, and design matrix $\textbf{X}$. Assuming that we have a block diagonal matrix of weights $\boldsymbol{\widehat{W}} = \text{diag}(\boldsymbol{\widehat{W}}_1,\ldots,\boldsymbol{\widehat{W}}_K)$, usually corresponding to the inverse variance weights with $\boldsymbol{\widehat{W}_i} = \left(\boldsymbol{\widehat{T}_i+V_i}\right)^{-1}$, then the weighted least squares estimator for $\boldsymbol{\beta}$ is given by \citep{hedges2010robust}

\begin{equation}
\boldsymbol{\hat{\beta} = (X'\widehat{W}X)^{-1}X'\widehat{W}Y}.
\end{equation}


\noindent
We will focus on constructing (multivariate) confidence regions for $\boldsymbol{\beta}$ and confidence intervals for the individual coefficients $\beta_j, \ j=1,\ldots,q$ based on testing the hypotheses $H_0: \{\boldsymbol{\beta} = \boldsymbol{\beta}_0\}$ vs. $H_1: \{\boldsymbol{\beta} \neq \boldsymbol{\beta}_0\}$. We set $\bs{\Sigma} = \text{Cov}(\bs{\hat{\beta}})$ and denote estimates thereof by $\bs{\widehat{\Sigma}}$. We discuss specific choices for estimating $\bs{\Sigma}$ in Section \ref{CR-estimators}.

Neglecting multiplicity, we note that a commonly used confidence interval for $\beta_j, \ j=1,\ldots,q$ is given by

\begin{equation}
\hat{\beta}_j \pm \sqrt{\boldsymbol{\widehat{\Sigma}}_{jj}} z_{1-\alpha/2}.
\end{equation}

\noindent
Here $z_{1-\alpha/2}$ denotes the $1-\alpha/2$ quantile of the standard normal distribution and $\widehat{\bs{\Sigma}}_{jj}$ denotes the $j^{th}$ diagonal element of $\widehat{\bs{\Sigma}}$. A confidence interval with better small sample performance that is asymptotically equivalent for $k \rightarrow \infty$ is given by using the $t_{p(k)-q,1-\alpha/2}$ quantile instead, which refers to the $1-\alpha/2$ quantile of the $t$-distribution with $p(k)-q$ degrees of freedom. Here $p(k) \coloneqq \sum_{i=1}^k p_i$ is the total number of observed effects, which is equal to the number of studies $k$ in the univariate setting \citep{viechtbauer2015comparison}. Alternatively the degrees of freedom of the $t$ distribution can be estimated via a Satterthwaite approximation, as suggested by \cite{bell2002bias}.

In order to construct a $(1-\alpha)$ confidence region for $\boldsymbol{\beta}$ we consider the usual Wald-type test-statistic \citep{tipton2015smallPustejovsky}

\begin{equation}
\label{mv_statistic}
Q = (\boldsymbol{\hat{\beta}} - \boldsymbol{\beta}_0)' \widehat{\boldsymbol{\Sigma}}^{-1} (\boldsymbol{\hat{\beta}} - \boldsymbol{\beta}_0),
\end{equation}

\noindent
Alternatively, if one were interested in testing more general hypotheses of the form $H_0: \{\bs{H} \boldsymbol{\beta} = \boldsymbol{c}\}$ vs. $H_1: \{\bs{H} \boldsymbol{\beta} \neq \boldsymbol{c}\}$ for some hypothesis matrix $\bs{H} \in \mathbb{R}^{s \times q}$ (which we assume to be of full rank) and vector $\textbf{c} \in \mathbb{R}^s$, then the test statistic becomes

\begin{equation*}
\label{mv_statistic_H}
Q_{\bs{H}} = (\bs{H}\boldsymbol{\hat{\beta}} - \boldsymbol{c})' (\bs{H}\widehat{\boldsymbol{\Sigma}}\bs{H}')^{-1} (\bs{H}\boldsymbol{\hat{\beta}} - \boldsymbol{c}),
\end{equation*}

\noindent
For example, the special case of a test regarding a single regression coefficient $\beta_a$ would be given by $\bs{H}$ equal to a vector of length $q$ with a 1 at entry $a$ and 0 otherwise.

Under the null hypothesis $Q$ is approximately $\chi^2_{q}$-distributed (and $Q_{\bs{H}}$ approximately $\chi^2_f$-distributed with $f = \text{rank}(\bs{H})$), assuming $\bs{\Sigma}$ is positive definite. However, it is known that tests based on this approximation can perform poorly for small to moderate values of $k$ \citep{tipton2015smallPustejovsky}. An arguably better alternative is the $F$-test

\begin{equation}
\label{F-test}
\mathds{1}\left\{Q > q F_{q,k-q,1-\alpha}\right\},
\end{equation}

\noindent
where $F_{q,k-q,1-\alpha}$ denotes the $1-\alpha$ quantile of an $F$-distribution with $q$ and $k-q$ degrees of freedom. This is analogous to the $t$-tests for univariate coefficients and is superior to the test based on the asymptotic $\chi^2$-approximation \citep{tipton2015smallPustejovsky}. However, the $F$-test has been criticized for only performing well in certain scenarios \citep{tipton2015small}. As a remedy for smaller $k$, \cite{tipton2015smallPustejovsky} proposed to approximate $Q$ by a Hotelling's $T^2$ distribution with parameters $q$ and (degrees of freedom) $\eta$, such that

\begin{equation}
\label{HotellingsT2}
\frac{\eta - q +1}{\eta q} Q \sim F(q,\eta -q +1).
\end{equation}

\noindent
They discuss different approaches for estimating the degrees of freedom $\eta$. Based on their research, they recommend an estimation approach, which they call \enquote{HTZ}. We briefly summarize this estimator, originally proposed by \cite{zhang2012approximate} for heteroscedastic one-way MANOVA, and refer to their paper for details.

First note that the statistic in (\ref{mv_statistic}) can also be written as $Q = \bs{z}'\bs{S}^{-1}\bs{z}$ with $\bs{z} = \bs{\Sigma}^{-1/2}(\boldsymbol{\hat{\beta}} - \boldsymbol{\beta}_0)$ and $\bs{S} = \bs{\Sigma}^{-1/2} \bs{\widehat{\Sigma}} \bs{\Sigma}^{-1/2}$. Under $H_0$, $\bs{z}$ is normally distributed with mean $\bs{0}$ and covariance $\bs{I}$ \citep{tipton2015smallPustejovsky}. Moreover, if $\bs{S}$ is a random $q \times q$ matrix such that $\eta \bs{S}$ follows a Wishart distribution with $\eta$ degrees of freedom and scale matrix $\bs{I}_q$, the estimator is given by

\begin{equation*}
\hat{\eta}_Z = \frac{q(q+1)}{\sum_{a=1}^q \sum_{b=1}^q \text{Var}(s_{ab})}.
\end{equation*}

\noindent
Here $s_{ab}$ denotes the entry $(a,b)$ of $\bs{S}$. This approach corresponds to setting the total variation in $\bs{S}$ equal to the total variation in a Wishart distribution \citep{tipton2015smallPustejovsky}.

However, our own simulations showed that there are situations when $\hat{\eta}_Z < q - 1$ and therefore $\hat{\eta}_Z-q+1<0$. Specifically this frequently happened in cases with a small number of studies ($k \leq 5$). As the degrees of freedom in an $F$ distribution cannot be negative the HTZ approach is not applicable here. Therefore we will stick to the classical $F$-test (\ref{F-test}), although we propose a small sample adjustment. In our simulations the $F$-test (\ref{F-test}) leads to very liberal or conservative results, depending on the variance-covariance estimator used, in settings with $k=5$ studies. We therefore propose to truncate the denominator degrees of freedom at the value two, i.e. we consider the $F$-test

\begin{equation}\label{F-test-adjust}
    \mathds{1}\left\{Q > q F_{q,\max(2,k-q),1-\alpha}\right\}.
\end{equation}

The simple motivation behind this adjustment is that for an $F_{m,n}$ distribution with degrees of freedom $m$ and $n$ the expected value $\frac{n}{n-2}$ only exists when $n>2$. We also tested a truncation of the denominator degrees of freedom at three. However, simulations indicate superior coverage of respective confidence intervals for a truncation at two.

Confidence regions for $\bs{\beta}$ can be derived via test inversion. For example, if (\ref{F-test-adjust}) is a test for $H_0: \{\bs{\beta = \beta_0}\}$ vs. $H_1: \{\bs{\beta \neq \beta_0}\}$, then the set

\begin{equation}
\label{confidence_region}
\Lambda := \left\{\bs{\beta} \in \mathbb{R}^q: (\boldsymbol{\hat{\beta}} - \bs{\beta})' \widehat{\boldsymbol{\Sigma}}^{-1} (\bs{\hat{\beta}} - \boldsymbol{\beta}) \leq qF_{q,\max(2,k-q),1-\alpha}\right\}
\end{equation}

\noindent
is a corresponding confidence region for $\bs{\beta}$.

A confidence ellipsoid can be obtained following \cite{johnson2014applied}, based on the eigenvalues $\hat{\lambda}_j$ and eigenvectors $\bs{\hat{e}}_j$ of $\widehat{\bs{\Sigma}}$. This means $\Lambda$ is an ellipsoid centered around $\bs{\hat{\beta}}$, whose axes are given by

\begin{equation*}
\bs{\hat{\beta}} \pm \sqrt{\hat{\lambda}_j qF_{q,\max(2,k-q),1-\alpha}}\bs{\hat{e}}_j, \ \ j=1,\ldots,q.
\end{equation*}

This means $\Lambda$ extends for $\sqrt{\hat{\lambda}_j qF_{q,\max(2,k-q),1-\alpha}}$ units along the estimated eigenvector $\bs{\hat{e}}_j$ for $j=1,\ldots,q$. 
Since the volume of an $n$-dimensional ellipsoid with axis lengths $a_1,\ldots,a_n$ is given by \citep{wilson2010volume}

\begin{equation*}
V = \frac{2 \pi^{n/2}}{n \Gamma(n/2)} \prod_{i=1}^n a_i,
\end{equation*}

\noindent
the volume of the confidence ellipsoid $\Lambda$ is equal to

\begin{equation*}
V_{\Lambda} = \frac{2 \pi^{q/2}}{q \Gamma(q/2)} \prod_{i=1}^q \sqrt{\hat{\lambda}_i qF_{q,\max(2,k-q),1-\alpha}}.
\end{equation*}

\section{Cluster-Robust Covariance Estimators}
\label{CR-estimators}


Robust variance-covariance estimators, also known as sandwich estimators or Huber-White estimators, have been recommended as a promising alternative in the context of meta-regression \citep{hedges2010robust,tipton2015small,welz2020simulation}. Robust estimators are designed to account for potential model misspecification. They have many desirable properties, such as  consistency under heteroscedasticity or asymptotic normality \citep{hedges2010robust} without making restrictive assumptions about the specific form of the  effect sizes' sampling distributions.

The reliability of confidence regions based on the statistic (\ref{mv_statistic}) depends on the quality of the estimator $\widehat{\boldsymbol{\Sigma}}$ for $\boldsymbol{\Sigma} = \text{Cov}(\boldsymbol{\hat{\beta}})$. The standard (Wald-type) estimator, which we will refer to as $ST$, is given by $(\boldsymbol{X' \widehat{W}X})^{-1}$. The motivation behind this estimator is that the true covariance matrix of $\bs{\widehat{\beta}}$ (given correct weights) is equal to $\bs{\Sigma} = (\boldsymbol{X' WX})^{-1}$ with $\bs{W} = \text{diag}\left(\bs{W_1},\ldots,\bs{W_K}\right)$ and $\bs{W_i} = \bs{T_i+V_i}$. However, this ignores the imprecision in the estimation of $\bs{T,V}$ and therefore in the estimation of $\textbf{W}$. In fact, if $\bs{T}$ is estimated poorly, this may lead to deviations from nominal Type 1 error and coverage of corresponding confidence regions \citep{sidik2005note}.

In the case of univariate meta-analysis and meta-regression \sloppy heteroscedasticity-consistent (HC) estimators can be applied \citep{sidik2005note,viechtbauer2015comparison,welz2020simulation}. For multivariate meta-regression however, the correlated nature of the study effects needs to be taken into account. We therefore consider cluster-robust (CR) estimators. A selection of CR estimators is, e.g., implemented in the \texttt{R} package \texttt{clubSandwich} \citep{pustejovskyClubSandwich}. The package recommendation is the \enquote{bias reduced linearization} approach $CR_2$, which is discussed in detail in \cite{tipton2015smallPustejovsky,pustejovsky2018small}. Sandwich estimators (of HC- as well as CR-type) are all of the general form


\begin{equation}
\widehat{\boldsymbol{\Sigma}} = \boldsymbol{(X'\widehat{W}X)^{-1}X'\widehat{W} \widehat{\Omega} \widehat{W}X(X'\widehat{W}X)^{-1}},
\end{equation}

\noindent
with the differences lying in the central \enquote{meat} matrix $\widehat{\boldsymbol{\Omega}}$, surrounded by the \enquote{bread}. This form motivates the name \enquote{sandwich} estimator. 
$HC_1$\footnote[2]{$\bs{\widehat{\Sigma}_{HC_1}} = \tfrac{k}{k-q} \boldsymbol{(X'\widehat{W}X)^{-1} \left(\sum_{i=1}^K X_i'\widehat{W}_i \hat{\varepsilon}_i^2 \widehat{W}_iX_i \right) (X'\widehat{W}X)^{-1}}$} is arguably the best known sandwich estimator in the context of univariate meta-regression \citep{hedges2010robust,viechtbauer2015comparison,tipton2015smallPustejovsky}. However, the extensions $HC_3$ and $HC_4$ are frequently recommended as superior alternatives in the non meta-analytic literature, see \cite{cribari2007inference} for details, and have been shown to be superior to $HC_1$ \citep{long2000using,hayes2007,zimmermann2019}. A natural extension of $HC_1$ for the multivariate setting and what we will refer to as $CR_1^*$ is defined as

\begin{equation}
\widehat{\boldsymbol{\Sigma}}_{CR_1^*} = \tfrac{k}{k-q} \boldsymbol{(X'\widehat{W}X)^{-1} \left(\sum_{i=1}^K X_i'\widehat{W}_i \widehat{\Omega}_i \widehat{W}_iX_i \right) (X'\widehat{W}X)^{-1}},
\end{equation}

\noindent
where $\widehat{\bs{\Omega}}_i = \bs{E_iE_i'}$ with $\bs{E_i} = \boldsymbol{Y_i-X_i\hat{\beta}}$ and $\frac{k}{k-q}$ is a correction factor that converges to 1 as $k$ goes to infinity. The motivation for this factor is to correct for a liberal behavior in case of few studies/clusters $k$; see the \texttt{clubSandwich} package for similar choices.

However, as our simulation study below will show, tests based on $CR_1^*$ are still quite liberal when $k$ is small. An alternative is to instead use a bias reduced linearization approach, which was originally proposed by \cite{bell2002bias} and further developed by \cite{pustejovsky2018small}. This estimator, called $CR_2$, is designed to be exactly unbiased under the correct specification of a working model. This is achieved via a clever choice of adjustment matrices in the formulation of the estimator, see \cite{tipton2015smallPustejovsky,pustejovsky2018small} for details. This is the recommended approach in the \texttt{clubSandwich} package \citep{pustejovskyClubSandwich}. Another alternative is the $CR_3$ estimator, which is a close approximation of the leave-one-(cluster)-out Jackknife variance-covariance estimator. $CR_3$ is also implemented in the \texttt{clubSandwich} package.

However, all of the estimators above can be unsatisfactory for small $k$, as our simulations will show. Therefore, in addition to these $CR$-estimators, we propose two others, which are extensions of the $HC_3$ and $HC_4$ estimators. Since $HC_3$ and $HC_4$ often outperform both $HC_1$ and $HC_2$ in the univariate regression setting \citep{long2000using,cribari2004asymptotic,welz2020simulation}, one would suspect their respective cluster-robust extensions to outperform in the case of multivariate regression. We therefore define $CR_3^*$ and $CR_4^*$ via

\begin{equation}
\widehat{\boldsymbol{\Sigma}}_{CR_3^*} = \boldsymbol{(X'\widehat{W}X)^{-1} \left(\sum_{i=1}^K X_i'\widehat{W}_i \widehat{\Omega}_{3i} \widehat{W}_iX_i \right)(X'\widehat{W}X)^{-1}},
\end{equation}

\begin{equation}
\widehat{\boldsymbol{\Sigma}}_{CR_4^*} = \boldsymbol{(X'\widehat{W}X)^{-1} \left(\sum_{i=1}^K X_i'\widehat{W}_i \widehat{\Omega}_{4i} \widehat{W}_iX_i \right) (X'\widehat{W}X)^{-1}}.
\end{equation}

\noindent
Here $\widehat{\bs{\Omega}}_{3i}$ is defined as
\begin{equation}
\label{cr3star}
    \widehat{\bs{\Omega}}_{3i} = \widehat{\bs{\Omega}}_i - \Delta + \Delta \cdot \left(\bs{I}_{p_i}-\text{diag}(\bs{H}_i)\right)^{-2},
\end{equation}

\noindent
where $\bs{H}_i$ refers to the submatrix of $\bs{H}$ with entries pertaining to study $i$, $p_i$ is the number of observed effects in study $i$ and $\Delta = \text{diag}(\bs{E_iE_i'})$. $\bs{H}$ refers to the hat matrix $\bs{H} = \bs{X(X'\widehat{W}X)^{-1}X'\widehat{W}}$. Furthermore, $\widehat{\bs{\Omega}}_{4i}$ is equal to (\ref{cr3star}) except $\Delta$ is multiplied with $\left(\bs{I}_{p_i}-\text{diag}(\bs{H}_i)\right)^{-\delta_i}$, where $\delta_i = \min\left\{ 4,h_{ii}/\bar{h} \right\}$ with $h_{ii}$ denoting the $i$-$th$ diagonal element of $\bs{H}$ and $\bar{h}$ is the average of the values in the diagonal of the hat matrix. This data-dependent exponent stems from the $HC_4$ suggestion by \cite{cribari2004asymptotic}. $HC_4$ performs well in univariate meta-regression \citep{welz2020simulation} and therefore motivates an extension to the cluster-robust context.

We highlight that our proposed estimator $CR_3^*$ is different from the estimator $CR_3$ implemented in the \texttt{R} package \texttt{clubSandwich} as proposed by \cite{bell2002bias}. Whereas the latter uses the entire hat matrix for each cluster, we propose to use just the diagonal elements. In contrast, the \enquote{meat} matrix for $CR_3$ is given by $\sum_{i=1}^K \bs{X_i'\widehat{W}_i (I - H_i)^{-1} \widehat{\Omega}_{i} (I - H_i)^{-1} \widehat{W}_iX_i}$. Furthermore note that $CR_3^*$ is not even equal to the estimator with meat matrix given by $$\sum_{i=1}^K \bs{X_i'\widehat{W}_i (I - \text{diag}(H_i))^{-1} \widehat{\Omega}_{i} (I - \text{diag}(H_i))^{-1} \widehat{W}_iX_i}$$ because $\bs{\widehat{\Omega}_{i}}$ is in general not a diagonal matrix (only block-diagonal), due to the clustered nature of the data.


For univariate regression we were able to prove the asymptotic equivalence of all $HC$ estimators, which is formulated in the supplement of \cite{welz2020simulation}. Under some some weak regularity conditions it follows that the leverages asymptotically converge to zero, as the number of studies $k$ goes to infinity. Therefore, we expected similar results to hold for $CR$ estimators with analogous arguments. A theorem regarding the asymptotic equivalence of $CR$ estimators under regularity conditions is given in the supplement of this paper, along with a proof.

\section{Data Analysis}

We exemplify the methods presented in this manuscript with the analysis of a dataset containing 81 trials examining overall (OS) and/or disease-free survival (DFS) in neuroblastoma patients with amplified (extra copies) versus normal MYC-N genes. The data are contained in the \texttt{R} package \texttt{metafor} and were previously analyzed by \cite{riley2003systematic, riley2007evaluation}. Amplified MYC-N levels are associated with poorer outcomes. The effect measures are log hazard ratios with positive values indicating an increased risk of death or relapse/death for patients with higher MYC-N levels as compared to patients with lower levels. 17 studies reported both outcomes, 25 studies only reported DFS and 39 studies only reported OS.

The dataset contains the log hazard ratios and the corresponding sampling variances. However, since no information is available on the sampling covariances between OS and DFS we must make some assumptions with regard to our working model. In the spirit of a sensitivity analysis we will first assume a weaker correlation of $\varrho_1 = 0.5$ and subsequently a stronger correlation of $\varrho_2 = 0.8$ and then compare the results. This means for a hypothetical study $i$ that reports log hazard ratios for OS and DFS, $y_{i,OS}$ and $y_{i,DFS}$, with an assumed correlation of 0.5 along with respective sampling variances $\sigma_{i,OS}^2$ and $\sigma_{i,DFS}^2$, we have the sampling variance-covariance matrix $V_i = \begin{pmatrix}
\sigma_{i,OS}^2 & 0.5 \cdot \sigma_{i,OS}\sigma_{i,DFS}\\
0.5 \cdot \sigma_{i,OS}\sigma_{i,DFS} & \sigma_{i,DFS}^2
\end{pmatrix}$.

We assume a multivariate meta-regression model that includes a random effect as in Section \ref{sec:setup} as well as an unstructured (but positive definite) variance-covariance matrix. In the following we are interested in testing whether both pooled effects are different from zero. When the full dataset is analyzed, the Wald-test for $H_0: \{\bs{\beta = 0}\}$ vs. $H_1: \{\bs{\beta \neq 0}\}$ returns a p-value $< 0.001$ for all CR estimators and for both $\varrho_1$ and $\varrho_2$. However, let us assume we only had the data from studies 1-5, which all contain results for both OS and DFS. Such a situation is not unrealistic, considering the median number of studies per meta-analyses in a sample of 22,453 published meta-analyses from the Cochrane Database was three \citep{davey2011characteristics}. This reduced dataset is shown in Table \ref{datrileysample}. The p-values for the estimators $CR_1*$, $CR_3*$, $CR_4*$, $CR_2$ and $ST$ for assumed correlations $\varrho_1, \ \varrho_2$ are displayed in Table \ref{p-values-riley}.

\begin{table}[H]
\centering
\begin{tabular}{crrrl}
  \hline
study & $y_i$ & $v_i$ & outcome \\ 
  \hline
  1 & -0.11 & 0.45 & DFS \\ 
  1 & -0.14 & 0.66 & OS \\ 
  2 & 0.30 & 0.07 & DFS \\ 
  2 & 0.67 & 0.08 & OS \\ 
  3 & 0.41 & 0.77 & DFS \\ 
  3 & 0.43 & 0.66 & OS \\ 
  4 & 0.47 & 0.29 & DFS \\ 
  4 & 2.08 & 0.45 & OS \\ 
  5 & 0.76 & 0.24 & DFS \\ 
  5 & 0.70 & 0.31 & OS \\ 
   \hline
\end{tabular}
\caption{Sample of five studies containing log hazard ratios ($y_i$) for disease-free and overall survival and their respective sampling variances ($v_i$).}
\label{datrileysample}
\end{table}

The results show that when the number of studies is small the p-values can vary substantially, depending on the choice of estimator. Furthermore, the results based on CR estimators appear to be more stable and depend much less on the underlying $\bs{V}$ matrix i.e. the assumed correlation between OS and DFS than the standard estimator $\bs{(X'\widehat{W}X)^{-1}}$. This motivates the use of a CR approach over the standard variance-covariance estimator.

\begin{table}[H]
\centering
\begin{tabular}{crr}
\multirow{2}{*}{Estimators} & \multicolumn{2}{c}{p-values} \\
 & $\varrho_1$ & $\varrho_2$ \\ 
  \hline
CR1* & 0.073 & 0.075\\ 
CR3* & 0.069 & 0.077\\ 
CR4* & 0.076 & 0.090\\ 
CR2 & 0.054 & 0.055\\ 
ST & 0.138 & 0.206\\ 
   \hline
\end{tabular}
\caption{p-values of Wald-tests based on CR estimators and the standard variance-covariance estimator $\bs{(X'\widehat{W}X)^{-1}}$ for assumed correlations of $\varrho_1 = 0.5$ and $\varrho_2 = 0.8$.}
\label{p-values-riley}
\end{table}

\section{Simulation Study}
\label{sim_study}

\textbf{Simulation Design} In order to assess the performance of the previously discussed methods, we conducted a Monte Carlo simulation. We considered $k \in \{5,10,20,40\}$ studies, average study sizes $N \in \{40,100\}$ with balanced treatment and control groups, coefficient vectors $\bs{\beta} = (\beta_0,\beta_1,\beta_2,\beta_3)' \in \{(0,0,0,0)',(0.2,0.2,0.1,0.1)',(0.4,0.4,0.2,0.3)'\}$, correlations $\varrho \in \{0,0.3,0.7\}$ and missing data ratios from $\{0,0.1,0.2,0.3,0.4\}$. The latter refers to the number of studies that only report one of the two effects of interest and $\varrho$ refers to the IPD correlations between the two observed outcomes. In the coefficient vector $\boldsymbol{\beta}$ the first two entries refer to the population means of the two effects of interest and the other two represent the effect of the study-level moderator on each effect respectively. Study sizes were varied, such that for an average study size $N$, $20\%$ of studies had size $0.8N,0.9N,\ldots,1.2N$ respectively. Datasets with missing data were generated by first simulating complete data and then removing entries completely at random.

The simulated study-level effects are (correlated) standardized mean differences (SMD). We estimated these SMDs via the adjusted Hedges' $g$ \citep{hedges1981distribution}
\begin{equation*}
g \coloneqq \frac{\Gamma(m/2)}{\sqrt{(m/2)}\Gamma((m-1)/2)} d
\end{equation*}
with $m=n_T+n_C-2$ and where $n_T$ and $n_C$ refer to the treatment and control group sizes. Hedges' $g$ is defined as $d = (\bar{x}_T - \bar{x}_C)/s^*$, with a pooled standard deviation $s^* = \sqrt{\frac{(n_T-1)s_T^2 + (n_C-1)s_C^2}{m}}$, where $s_T^2,s_C^2$ refer to the variances in the treatment and control groups respectively \citep{hedges1981distribution}. This adjustment to Hedges' $g$ yields an unbiased effect estimator \citep{lin2021evaluation}. We generated the SMDs by first simulating individual participant data (IPD). The treatment and control group IPD observations $Y_{ij}^T$ and $Y_{ij}^C$ were drawn from bivariate normal distributions respectively. More precisely, for study $i=1,\ldots,k$ and participant $j=1,\ldots,N_i/2$ the observations are drawn from $Y_{ij}^T \sim \mathcal{N}(\theta_i,P)$ and $Y_{ij}^C \sim \mathcal{N}(\bs{0},P)$ with $\theta_i = \bs{X\beta + u_i}$ and $P = \begin{pmatrix}
1 & \varrho\\
\varrho & 1
\end{pmatrix}$ is the population correlation matrix of the outcomes in study $i$. $\bs{X}$ is a $2 \times q$ design matrix of covariates. In our specific simulation design of a single study-level covariate $x$ with potentially different influence on the two study effects we have $\bs{X} = \begin{pmatrix}
1 & 0 & x & 0\\
0 & 1 & 0 & x
\end{pmatrix}$.

\noindent
For the heterogeneity matrix $\bs{T}$ we consider the two settings $$\begin{pmatrix}
\tau^2 & 0.2\tau^2\\
0.2\tau^2 & \tau^2
\end{pmatrix} \text{and }
\begin{pmatrix}
\tau^2 & 0.4\tau^2\\
0.4\tau^2 & 2\tau^2
\end{pmatrix}.$$

\noindent
For $M = N/2$ (average size of the treatment and control groups), we set $\tau^2 := \frac{2}{M} + \frac{\beta_0^2}{4M} = \frac{4}{N} + \frac{\beta_0^2}{2N}$, which is approximately equal to the sampling variance of the standardized mean difference \citep{borenstein2021introduction}. This corresponds to an $I^2$ value of 0.5. Here, $I^2$ refers to the percentage of the total variation across studies that is due to heterogeneity rather than sampling variation \citep{higgins2002quantifying}.

We briefly discuss the covariance between two SMDs in the setting where we have a single treatment and control group but with different outcome measures. The resulting effect sizes will be correlated because the outcomes are collected from the same study participants. \cite{olkin2009stochastically} showed that a large sample estimate for the covariance between two SMDs $d_1$ and $d_2$ with estimated (raw data) correlation $\hat{\varrho}$ is given by
\begin{equation}
    \widehat{\Cov}(d_1,d_2) = \hat{\varrho}\left(\tfrac{1}{n_T} + \tfrac{1}{n_C}\right) + \frac{\hat{\varrho}^2d_1d_2}{m}.
\end{equation}
Thus we obtain
\begin{equation}
    \widehat{\Cov}(g_1,g_2) = \left(\frac{\Gamma(m/2)}{\sqrt{(m/2)}\Gamma((m-1)/2)} \right)^2 \left(\hat{\varrho}\left(\frac{1}{n_T} + \frac{1}{n_C}\right) + \frac{\hat{\varrho}^2d_1d_2}{m}\right).
\end{equation}    
All results are based on a nominal significance level $\alpha=0.05$. For each scenario we performed $N=5000$ simulation runs. The primary focus was on comparing empirical coverage of the confidence regions (\ref{confidence_region}) with nominal coverage being $1-\alpha=0.95$. For 5000 iterations, the Monte Carlo standard error of the simulated coverage will be approximately $\sqrt{\frac{0.95 \times 0.05}{5000}} \approx 0.31\%$ and assuming a power of 80\% the Monte Carlo standard error of the simulated power will be approximately $\sqrt{\frac{0.8 \times 0.2}{5000}} \approx 0.57\%$ \citep{morris2019using}.

All simulations were performed using the open-source software \texttt{R}. The \texttt{R} scripts written by the first author especially make use of the \texttt{metafor} package for meta-analysis \citep{viechtbauer2010conducting} as well as James Pustejovsky's \texttt{clubSandwich} package.

\begin{center}
\textbf{Results}
\end{center}

Figures \ref{fig:coverage5}--\ref{fig:coverage40} display the empirical coverage based on the adjusted $F$-test (\ref{F-test-adjust}) and estimators $CR_1^*$, $CR_3^*$, $CR_4^*$, $CR_2$ and $ST$. $CR_1^*$ and $CR_2$ yield much less than nominal coverage 95\% in all settings, but especially for $k<40$. $CR_2$ gives around 50\% coverage for five studies, between 70-80\% for ten, 82-87\% for twenty and 88-91\% coverage for forty studies. The $CR_1^*$ estimator yields between 25-50\% coverage for five studies, 65-75\% for ten, 80-86\% for twenty and 87-91\% for forty studies. It is interesting to observe a clustering of coverage results for the estimator $CR_1*$ and $k=5$ (depending on the inter-study correlation of effects) that cannot be observed for any other setting or estimator. The standard estimator $ST$ gives approximately correct coverage for $k \geq 20$ but is highly conservative for $k \leq 10$ studies, especially for five. $CR_3^*$ very consistently yields slightly more coverage than $CR_4^*$ in all settings except for $k=40$ where the difference between the two is negligible. For $k=5$ coverage based on $CR_4^*$ is approximately nominal and when based on $CR_3^*$ slightly conservative. For $k=10$ and $k=20$ $CR_4^*$ gives coverage around 91-92\% and $CR_3^*$ around 93-94\%. For $k=40$ both yield coverage around 92-94\%.

\begin{figure}[H]
\centering
\includegraphics[height=240pt,width=\textwidth]{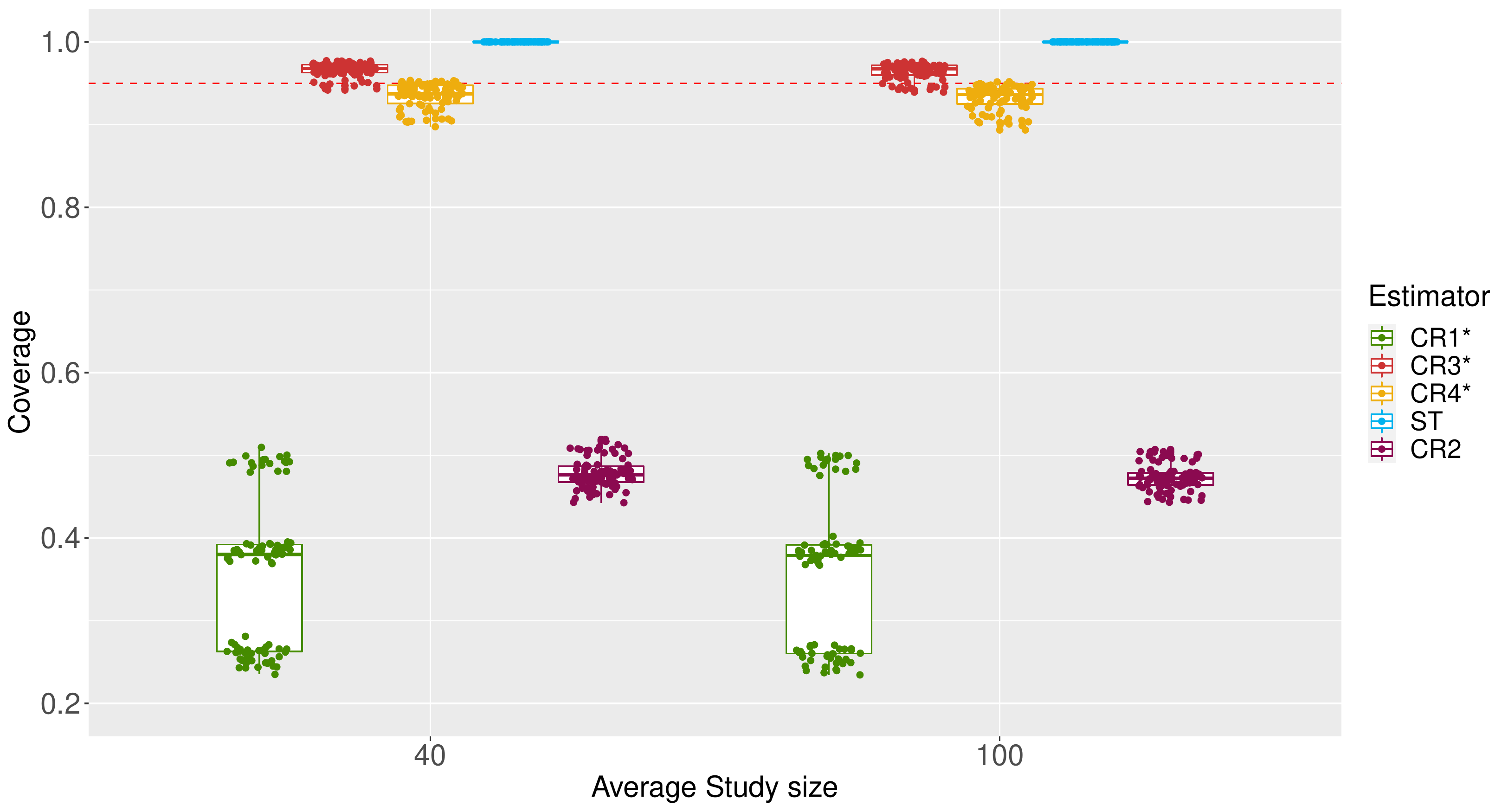}
\caption{Coverage of the confidence set (\ref{confidence_region}) based on an inversion of the adjusted $F$-test for $k=5$ studies.}
\label{fig:coverage5}
\end{figure}

\begin{figure}[H]
\centering
\includegraphics[height=240pt,width=\textwidth]{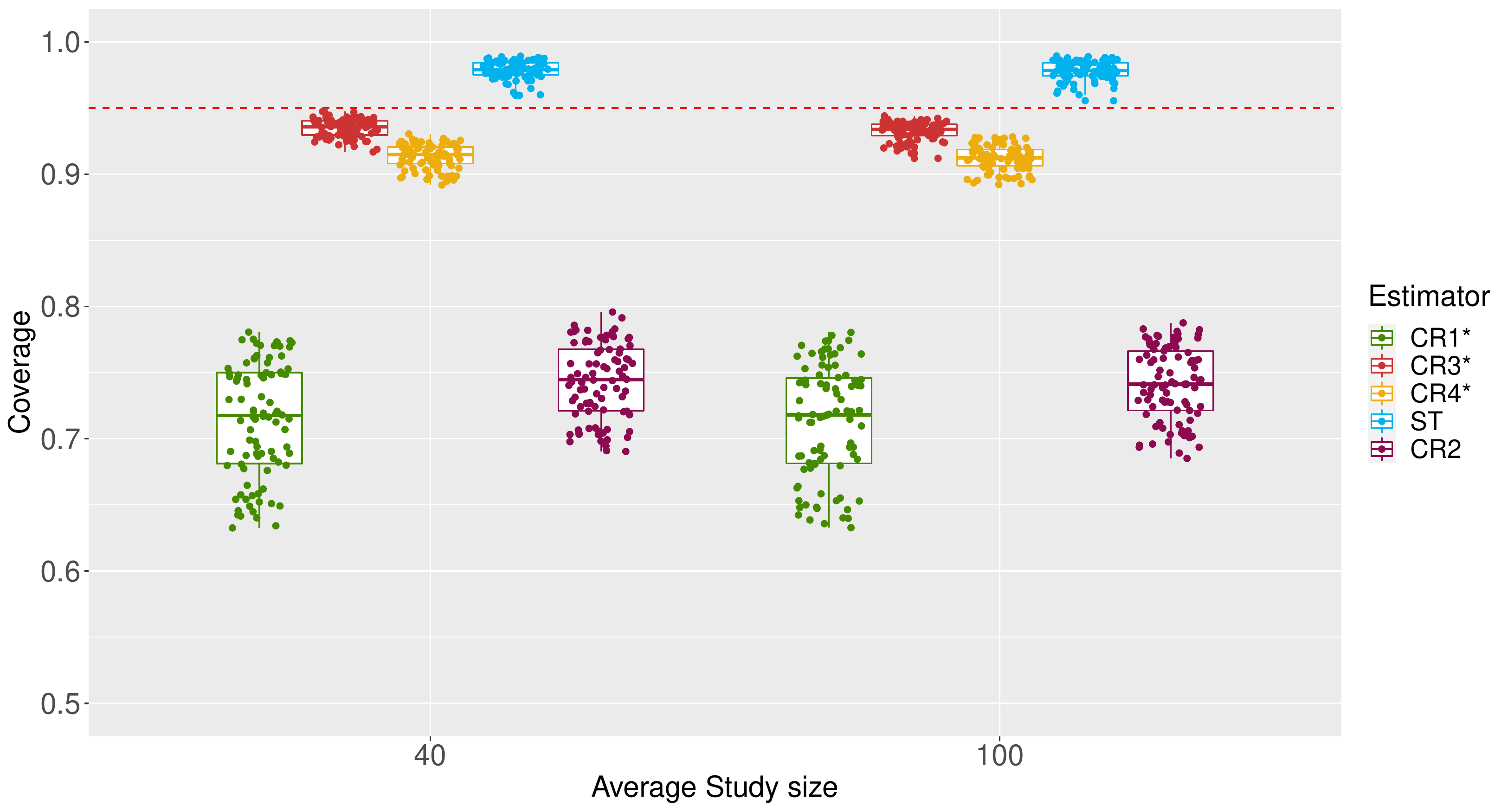}
\caption{Coverage of the confidence set (\ref{confidence_region}) based on an inversion of the adjusted $F$-test for $k=10$ studies.}
\label{fig:coverage10}
\end{figure}

\begin{figure}[H]
\centering
\includegraphics[height=240pt,width=\textwidth]{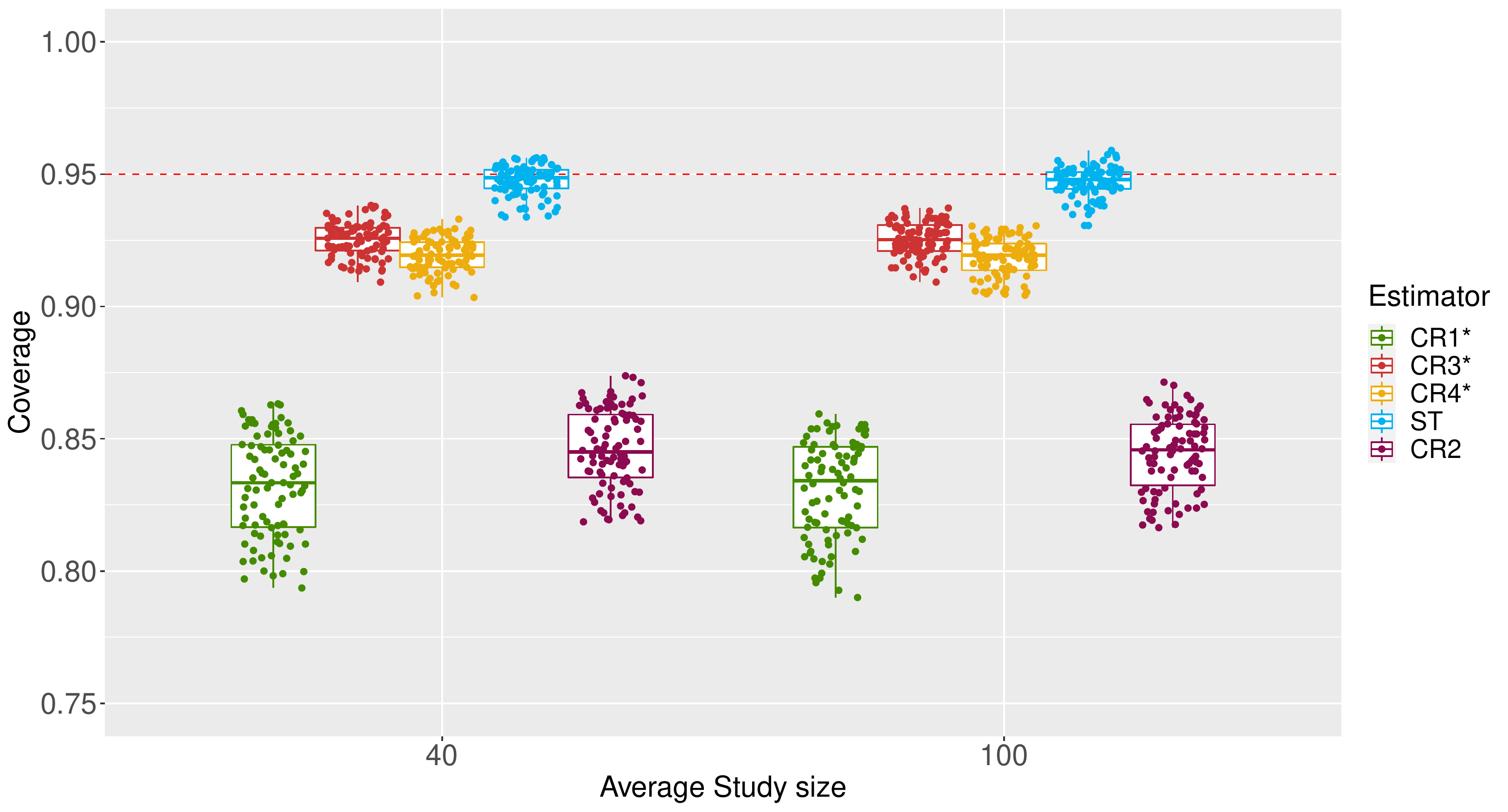}
\caption{Coverage of the confidence set (\ref{confidence_region}) based on an inversion of the adjusted $F$-test for $k=20$ studies.}
\label{fig:coverage20}
\end{figure}

\begin{figure}[H]
\centering
\includegraphics[height=230pt,width=\textwidth]{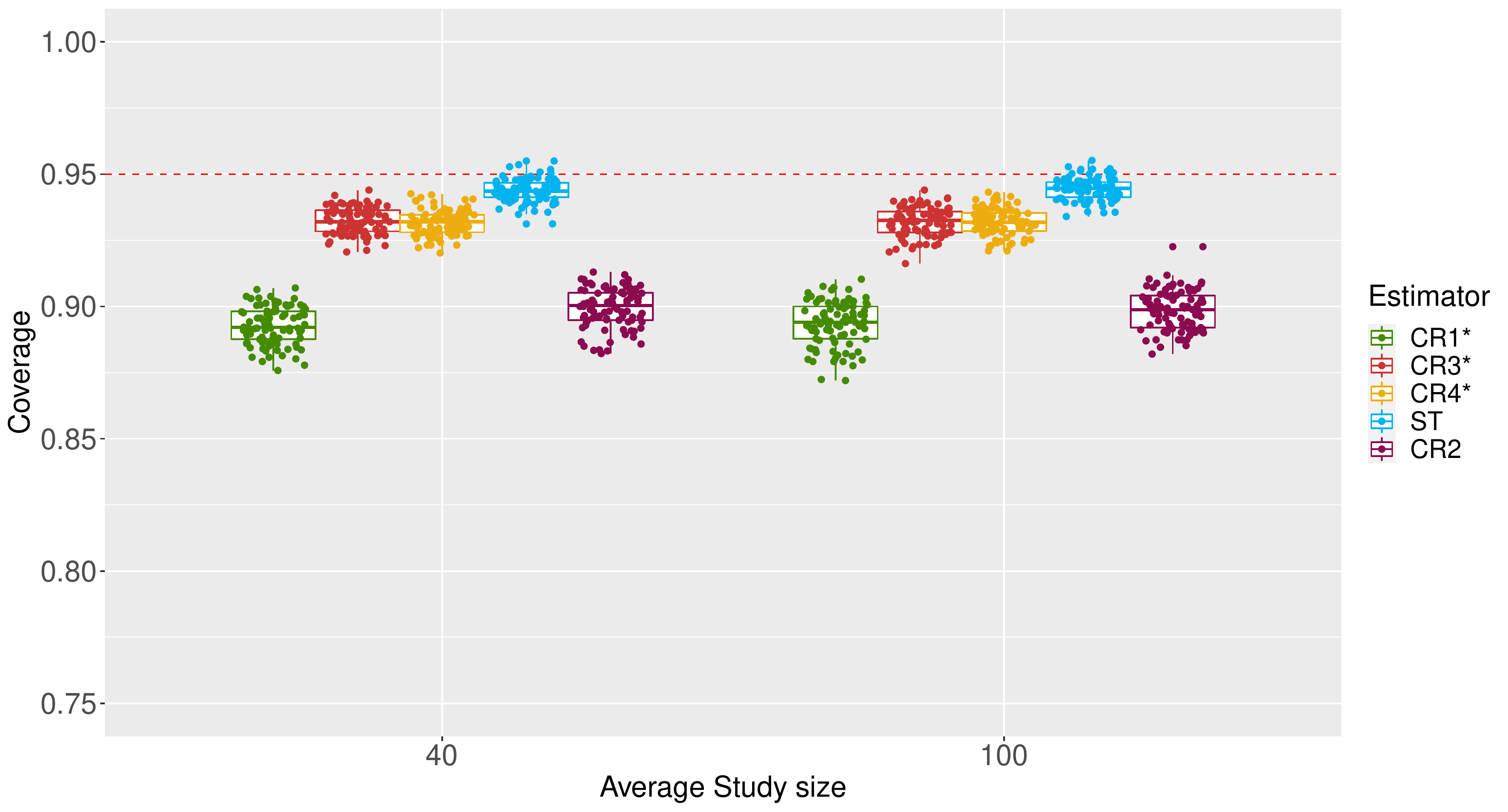}
\caption{Coverage of the confidence set (\ref{confidence_region}) based on an inversion of the adjusted $F$-test for $k=40$ studies.}
\label{fig:coverage40}
\end{figure}

In addition to these empirical coverage results, we also consider the power related to the respective tests and confidence regions. The power plots are provided in Figures \ref{fig:power1} and \ref{fig:power2} for $\bs{\beta} = (0.2,0.2,0.1,0.1)'$ and $\bs{\beta} = (0.4,0.4,0.2,0.3)'$ respectively. We show box plots to summarize the various simulation settings. For $\bs{\beta} = (0.4,0.4,0.2,0.3)'$ power is monotone increasing in the number of studies $k$ for all estimators. For $\bs{\beta} = (0.2,0.2,0.1,0.1)'$ power is monotone increasing in $k$ for $CR_3^*,CR_4^*$ and $ST$, whereas for $CR_1^*$ and $CR_2$ power decreases from a median of approximately 70\% and 60\% to 55\% and 52\% respectively, when going from five to ten studies and then increases in $k$ beyond this point.

The differences in power between the considered estimators are small for a large number of studies and become more pronounced as the number of studies decreases. For forty studies the power based on all estimators is nearly identical for both choices of $\bs{\beta}$. For twenty studies power based on $CR_1^*$ and $CR_2$ is slightly higher than for the other estimators. $CR_3^*$, $CR_4^*$ and $ST$ yield approximately the same power for both choices of $\bs{\beta}$ and twenty studies. For $k=10$ and $\bs{\beta} = (0.2,0.2,0.1,0.1)'$ the median power for $CR_1^*$ and $CR_2$ is around 55\% and 52\% respectively, whereas for $CR_3^*$, $CR_4^*$ and $ST$ it is around 25\%, 31\% and 20\% respectively. For $k=10$ and $\bs{\beta} = (0.4,0.4,0.2,0.3)'$ the median power for $CR_1^*$ and $CR_2$ is around 87\%, whereas for $CR_3^*$, $CR_4^*$ and $ST$ it is around 70\%, 74\% and 73\% respectively. For $k=5$ and $\bs{\beta} = (0.2,0.2,0.1,0.1)'$ the median power for $CR_1^*$ and $CR_2$ is around 70\% and 60\% respectively, whereas for $CR_3^*$, $CR_4^*$ and $ST$ it is only around 8\%, 12\% and 0\% respectively. For $k=5$ and $\bs{\beta} = (0.4,0.4,0.2,0.3)'$ the median power for $CR_1^*$ and $CR_2$ is around 83\% and 70\% respectively, whereas for $CR_3^*$, $CR_4^*$ and $ST$ it is around 13\%, 24\% and 1\% respectively.


\begin{figure}[H]
\centering
\includegraphics[height=210pt,width=\textwidth]{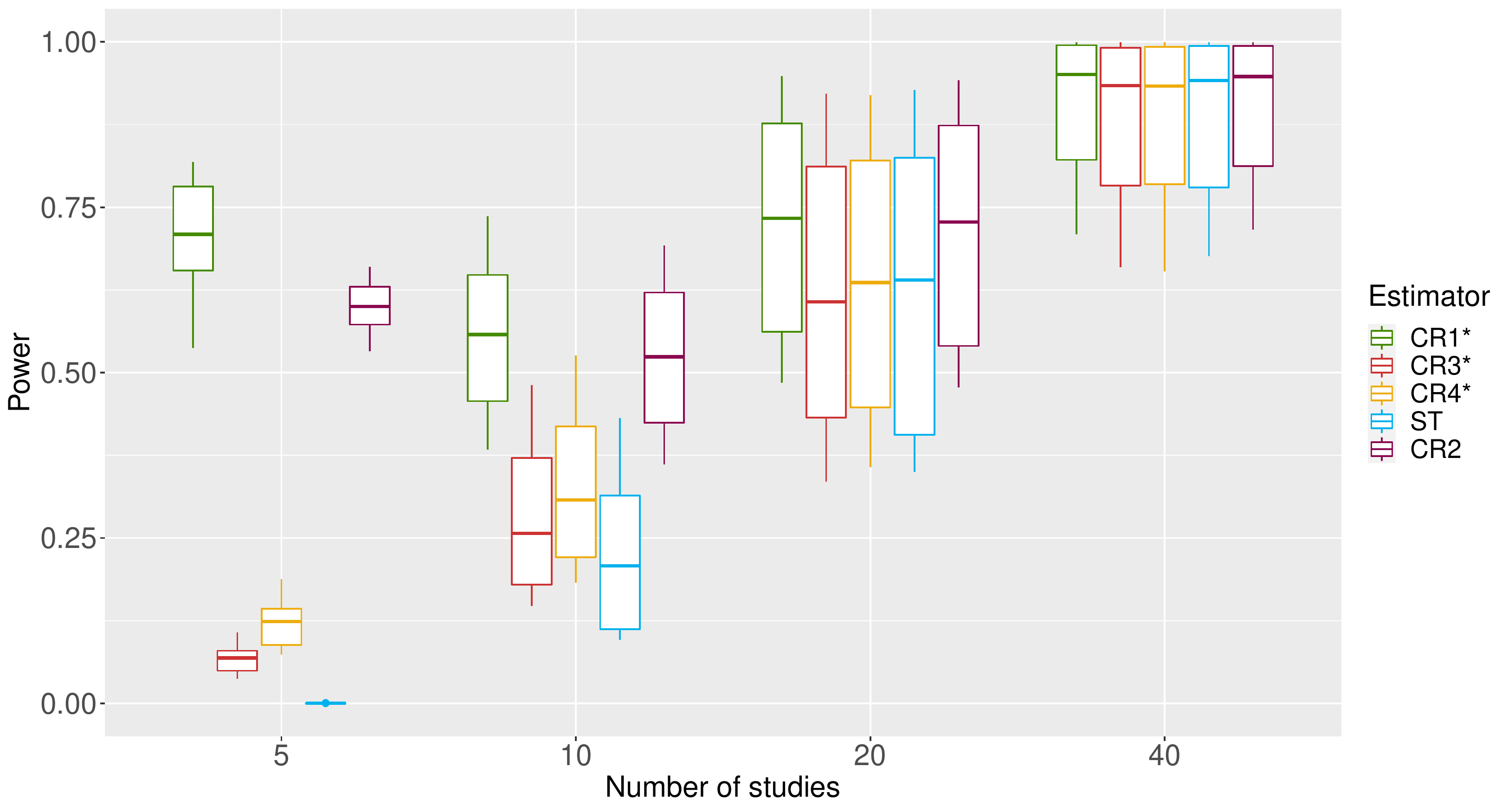}
\caption{Box plots of power based on adjusted $F$-test for all settings with $\bs{\beta} = (0.2,0.2,0.1,0.1)'$.}
\label{fig:power1}
\end{figure}

\begin{figure}[H]
\centering
\includegraphics[height=210pt,width=\textwidth]{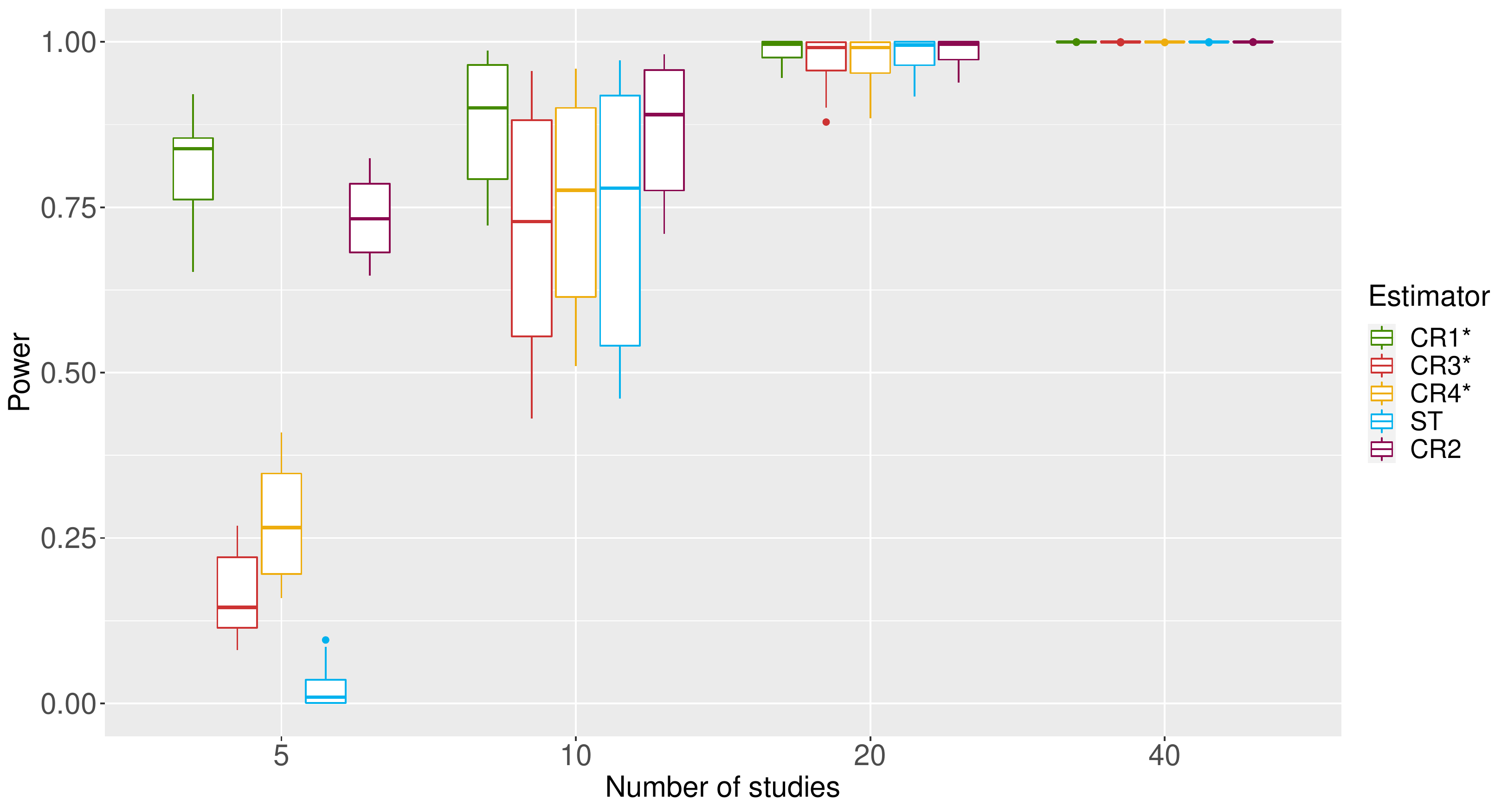}
\caption{Box plots of power based on adjusted $F$-test for all settings with $\bs{\beta} = (0.4,0.4,0.2,0.3)'$.}
\label{fig:power2}
\end{figure}

\newpage
\section{Discussion}

Multivariate Meta-Regression is an important tool for synthesizing and interpreting results from trials reporting multiple, correlated effects. However, information on these correlations is rarely available to analysts, making it difficult to construct the variance-covariance $\bs{V}$ matrix of the studies' sampling errors. Cluster-robust estimators allow for a correction of the standard errors, therefore enabling more reliable inference. In this paper we introduced two new proposals of CR estimators for use in multivariate meta-regression. We performed a simulation study, comparing these estimators with results based on two alternative CR estimators and the standard variance-covariance estimator with a focus on coverage and power of confidence sets and tests, as well as an illustrative real life data analysis. In our manuscript we only investigated the bivariate meta-regression setting, although all methods discussed are also applicable in higher dimensions. Further work is necessary to assess the viability of our suggestions in other settings, such as when the number of effects per study is greater than two.

Our main findings can be summarized as follows: The Zhang estimator, discussed in \cite{tipton2015smallPustejovsky}, can lead to a negative estimate of the denominator degrees of freedom in the $F$-distribution. This can occur when the number of studies is very small. The AHZ approach is therefore not recommendable for bivariate meta-regression if the number of studies is small ($k \leq 5$). Furthermore, when using the classical $F$-test in the bivariate setting, we recommend truncating the denominator degrees of freedom at two. The $CR_1^*$ and $CR_2$ estimators yield an empirical coverage that lies far below the nominal level $1 - \alpha$ and the coverage based on the other estimators, especially for smaller numbers of studies. On the flip side the tests based on these two $CR$-estimators unsurprisingly have superior power. The $ST$ estimator has approximately correct coverage for $k \geq 20$ studies but is highly conservative for $k \leq 10$ studies. $CR_3^*$ and $CR_4^*$ yield approximately correct coverage for five studies. $CR_3^*$ also gives nearly correct coverage for ten studies whereas $CR_4^*$ becomes slightly liberal in this case.

Based on our results we recommend using either the $CR_3^*$ or $CR_4^*$ estimator for bivariate meta-regression if $k \leq 10$ with a very slight preference for $CR_3^*$. For an analysis with $k \geq 20$ studies the $ST$ estimator seems to work best.

A limitation of our simulation study is that the sampling covariances between study-level effects were available for the construction of weight matrices. As mentioned in the introduction, this is often not feasible in practice, requiring analysts to calculate weights using a specified working model for the covariance structure. \cite{hedges2010robust} provide possible working models likely to be found in meta-analyses. They propose the use of approximately inverse variance weights, based on these working models.

An open question that requires further research is what the best testing procedure is when the number of studies $k$ is no greater than around five. Neither the adjusted Hotelling's $T^2$ approach in combination with Zhang's estimator for the degrees of freedom, which was recommended by \cite{tipton2015smallPustejovsky}, nor the naive or adjusted $F$-tests used in our simulations seem to be the ideal approach. This requires more intensive work that is outside the scope of this manuscript. For a discussion of alternative estimation approaches for the degrees of freedom in the adjusted Hotelling approach, we refer to \cite{tipton2015smallPustejovsky}. Another question for future research is whether other statistics or resampling approaches that have shown promising small sample approximations for heterogeneous MAN(C)OVA settings \citep{friedrich2017permuting,friedrich2018mats,zimmermann2020} can also help in multivariate meta-regression models.

\newpage
\bibliographystyle{apalike}
\bibliography{mvmeta_lit}

\newpage
\section*{Acknowledgements}

This work was supported by the German Research Foundation (DFG) (Grant no. PA-2409 7-1). The authors gratefully acknowledge the computing time provided on the Linux HPC cluster at TU Dortmund University (LiDO3), partially funded in the course of the Large-Scale Equipment Initiative by the German Research Foundation as project 271512359.

We would also like to thank James Pustejowsky for his helpful comments during the research phase for this manuscript.

\section*{Data Availability Statement}
The neuroblastoma dataset is contained in the \texttt{R} package \texttt{metafor}. All \texttt{R} scripts will be made publicly available, pending publication.

\newpage
\begin{center}
\large{Supplement to \enquote{Cluster-Robust Estimators for Bivariate Mixed-Effects Meta-Regression}}
\end{center}

\begin{theorem}
Suppose there is a $k_0 \in \mathbb{N}$ such that $k(\bs{X'\widehat{W}X})^{-1}$ exists and is uniformly bounded element-wise for all $k \geq k_0$. Furthermore, let $\widehat{\bs{T}}$ be a consistent estimator for $\boldsymbol{T}$ and $\Lambda$ the confidence region defined in the main paper. Then the $CR$ estimators $CR_0, CR_1, CR_2, CR_3, CR_3^*,$ $CR_4^*$ are asymptotically equivalent and we have $P(\Lambda \ni \bs{\beta}) \longrightarrow 1-\alpha$ as $k \rightarrow \infty$.
\end{theorem}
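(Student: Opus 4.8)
\section*{Proof sketch}

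The plan is to route everything through a single structural fact: under the stated hypothesis (together with the implicit regularity that the study-level quantities $\bs{X_i}$ and $\bs{V_i}$ are bounded, and hence that $\bs{\widehat{W}_i}=(\bs{\widehat{T}_i+V_i})^{-1}\preceq\bs{V_i}^{-1}$ is bounded, eventually in probability, since $\bs{\widehat{T}}$ is consistent), the cluster leverages vanish uniformly. Indeed $\bs{H}_i=\bs{X_i(X'\widehat{W}X)^{-1}X_i'\widehat{W}_i}=\tfrac{1}{k}\,\bs{X_i}\,[k(\bs{X'\widehat{W}X})^{-1}]\,\bs{X_i'\widehat{W}_i}$, so $\max_i\|\bs{H}_i\|=O_p(1/k)$ and in particular $\max_i\|\operatorname{diag}(\bs{H}_i)\|\to 0$. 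This is the cluster analogue of the ``leverages converge to zero'' statement used for the $HC$ estimators in the supplement of \cite{welz2020simulation}, and it is the engine for the rest of the argument.

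Next I would show that every adjustment occurring in the listed estimators converges to the identity transformation. All six share the bread $(\bs{X'\widehat{W}X})^{-1}$ and differ only in the meat: $CR_0$ uses $\bs{\widehat{\Omega}_i}$ with no correction; $CR_1$ multiplies the meat by a scalar factor converging to $1$; $CR_3$ replaces $\bs{\widehat{\Omega}_i}$ by $(\bs{I}-\bs{H}_i)^{-1}\bs{\widehat{\Omega}_i}(\bs{I}-\bs{H}_i)^{-1}$; $CR_2$ replaces it by $\bs{A}_i\bs{\widehat{\Omega}_i}\bs{A}_i'$ with the Bell--McCaffrey adjustment matrices $\bs{A}_i$, which are continuous matrix functions of $\bs{H}_i$ equal to $\bs{I}$ when $\bs{H}_i=\bs{0}$; and $CR_3^*,CR_4^*$ use $\bs{\widehat{\Omega}}_{3i},\bs{\widehat{\Omega}}_{4i}$, which differ from $\bs{\widehat{\Omega}_i}$ by $\Delta\big[(\bs{I}_{p_i}-\operatorname{diag}(\bs{H}_i))^{-r_i}-\bs{I}_{p_i}\big]$ with $r_i=2$, respectively $r_i=\delta_i\in[0,4]$. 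Since $(1-h)^{-r}-1=O(h)$ uniformly for $r$ in a bounded set as $h\to 0$, and $\operatorname{diag}(\bs{H}_i)\to\bs{0}$, each correction equals the identity up to an $O_p(1/k)$ perturbation in operator norm. Summed over the $k$ clusters, the perturbation to $\sum_i\bs{X_i'\widehat{W}_i\widehat{\Omega}_i\widehat{W}_iX_i}=O_p(k)$ is $O_p(1)$; conjugating by $(\bs{X'\widehat{W}X})^{-1}=O(1/k)$ turns the main term into the correct order $O_p(1/k)$ for $\widehat{\bs{\Sigma}}$ and the perturbation into $O_p(1/k^2)$, so $k\widehat{\bs{\Sigma}}_{CR_a}-k\widehat{\bs{\Sigma}}_{CR_b}\xrightarrow{p}0$ for any two of the six estimators. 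That is the asserted asymptotic equivalence.

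For the coverage statement I would run the standard Wald argument. Writing $\bs{\hat\beta}-\bs{\beta}=(\bs{X'\widehat{W}X})^{-1}\bs{X'\widehat{W}(u+\varepsilon)}$, one has $\sqrt{k}(\bs{\hat\beta}-\bs{\beta})=[k(\bs{X'\widehat{W}X})^{-1}]\cdot k^{-1/2}\sum_{i=1}^k\bs{X_i'\widehat{W}_i(u_i+\varepsilon_i)}$; replacing $\bs{\widehat{W}_i}$ by $\bs{W_i}$ is asymptotically negligible because $\bs{\widehat{T}}\xrightarrow{p}\bs{T}$, after which the sum is over independent mean-zero terms with bounded variances, so a Lindeberg--Feller CLT gives asymptotic normality, and a weak law gives $k^{-1}\sum_i\bs{X_i'\widehat{W}_i\widehat{\Omega}_i\widehat{W}_iX_i}\xrightarrow{p}$ the corresponding middle matrix (using $\bs{\widehat{\Omega}_i}=\bs{E_iE_i'}$ with $\bs{E_i}=(\bs{u_i+\varepsilon_i})-\bs{X_i}(\bs{\hat\beta}-\bs{\beta})$ and consistency of $\bs{\hat\beta}$). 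Combining these with the equivalence above and Slutsky's theorem, $Q=[\sqrt{k}(\bs{\hat\beta}-\bs{\beta})]'[k\widehat{\bs{\Sigma}}_{CR}]^{-1}[\sqrt{k}(\bs{\hat\beta}-\bs{\beta})]\xrightarrow{d}\chi^2_q$ for each of the six estimators. Finally $qF_{q,\max(2,k-q),1-\alpha}\to\chi^2_{q,1-\alpha}$, since $F_{q,m}\xrightarrow{d}\chi^2_q/q$ as $m\to\infty$ and the quantiles converge; hence $P(\Lambda\ni\bs{\beta})=P\!\left(Q\le qF_{q,\max(2,k-q),1-\alpha}\right)\to1-\alpha$.

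The hard part will be Step~2 at full rigor: isolating precisely which boundedness and positive-definiteness conditions on $\bs{X_i}$, $\bs{V_i}$ and $\bs{\widehat{T}}$ are needed for $\max_i\|\bs{H}_i\|\to0$, and checking that the Bell--McCaffrey $\bs{A}_i$ (defined via a symmetric square root) really is a continuous function of $\bs{H}_i$ equal to $\bs{I}$ at $\bs{H}_i=\bs{0}$ --- the other corrections are explicit and elementary, but $CR_2$'s is not. A secondary wrinkle is that the hypothesis only gives boundedness, not convergence, of $k(\bs{X'\widehat{W}X})^{-1}$; the CLT step should therefore be phrased along subsequences (Bolzano--Weierstrass), or under the mild extra assumption that this matrix converges. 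Either way the limit of $Q$ is $\chi^2_q$ and the conclusion is unchanged.
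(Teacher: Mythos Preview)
Your approach is essentially the paper's: both derive the vanishing of the cluster leverages $\bs{H}_i$ from the uniform boundedness of $k(\bs{X'\widehat{W}X})^{-1}$, deduce that every adjustment collapses to the identity, and then combine a CLT/consistency argument with Slutsky and $qF_{q,k-q,1-\alpha}\to\chi^2_{q,1-\alpha}$ for the coverage. Your sketch is in fact more thorough than the paper's own proof: you work at the relevant scaling $k\widehat{\bs{\Sigma}}$ rather than the trivially-true $\widehat{\bs{\Sigma}}_a-\widehat{\bs{\Sigma}}_b\to\bs{0}$, you treat $CR_2$ explicitly (the paper's proof lists only $\{CR_0,CR_1^*,CR_3,CR_3^*,CR_4^*\}$ in its equivalence step and then simply asserts consistency for all $CR$ estimators), and you correctly flag the subsequence issue arising because the hypothesis gives only boundedness, not convergence, of $k(\bs{X'\widehat{W}X})^{-1}$.
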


\begin{proof}
Let $k \in \mathbb{N}$ be the number of studies, $\bs{\beta} \in \R^{q}$ and $\textbf{X} \in \R^{k \times q}$. Furthermore, $\bs{\hat{\beta}} = \bs{(X'\widehat{W}X)^{-1}X'\widehat{W}y}$ and $\textbf{H} = \bs{X(X'\widehat{W}X)^{-1}X'\widehat{W}}$. Then
$$\textbf{H}\bs{X} = \bs{X(X'\widehat{W}X)^{-1}X'\widehat{W}X} = \textbf{X},$$
and since $\textbf{X}$ is a design matrix with the first column equal to $\bs{1}_{kp}$, all row sums in $\textbf{H}$ are equal to 1. Due to the regularity condition that there exists a $k_0 \in \mathbb{N}$ such that $\forall \ k \geq k_0: \ k(\bs{X'\widehat{W}X})^{-1}$ exists and is uniformly bounded element-wise, we have that for every $i,j \in \{1,\ldots,kp\}: h_{ij} \overset{a.s.}{\rightarrow} 0$ as $k \rightarrow \infty$.

So for $i \in \{1,\ldots,k\}$ it holds that $\textbf{H}_i \rightarrow \textbf{0}$ as $k \rightarrow \infty$. Here $\textbf{H}_i$ refers to the submatrix of $\textbf{H}$ with entries pertaining to study $i$. Thus $(\textbf{I}_{p_i}-\textbf{H}_i)^{\eta} \longrightarrow \textbf{I}_{p_i}$ and also $(\textbf{I}_{p_i}-\text{diag}(\textbf{H}_i))^{\eta} \longrightarrow \textbf{I}_{p_i}$ as $k \rightarrow \infty$ for any $\eta \in \R$. It follows that $\bs{\widehat{\Sigma}}_{a} - \bs{\widehat{\Sigma}}_{b} \rightarrow \textbf{0}_{q\times q}$ as $k \rightarrow \infty$ for any choice of $a,b \in \{CR_0,CR_1^*,CR_3,CR_3^*,CR_4^*\}$, i.e. they are asymptotically equivalent.

Consider the test statistic
$$Q = \bs{(\hat{\beta}-\beta_0)'\widehat{\Sigma}^{-1}_{CR}(\hat{\beta}-\beta_0)},$$

\noindent
where $\bs{\widehat{\Sigma}}_{CR}$ is one of the considered $CR$ variance-covariance estimators. Then for any choice of $CR$ estimator (as they are all consistent) we have $\bs{\widehat{\Sigma}}_{CR} \rightarrow \bs{\Sigma} = \text{Cov}(\bs{\widehat{\beta}})$ as $k \rightarrow \infty$. It follows with Slutzky's Lemma that $Q \overset{d}{\longrightarrow} \chi^2_q$ as $k \rightarrow \infty$ because with Lemma 2 in \cite{white1980heteroskedasticity}, it holds that 
$\bs{\widehat{\Sigma}^{-1/2}_{CR} \widehat{\beta}} \overset{d}{\longrightarrow} \mathcal{N}(\bs{\beta},\bs{I}_q)$. Furthermore it holds that $qF_{q,k-q,1-\alpha} \overset{d}{\longrightarrow} \chi^2_{q,1-\alpha}$ as $k \rightarrow \infty$ because $F_{q,k-q} \overset{d}{=} \frac{\chi_q^2/q}{\chi_{k-q}^2/(k-q)}$, where $\chi^2_q,\chi^2_{k-q}$ are independent chi-squared distributed random variables with $q,k-q$ degrees of freedom and $\xi := \chi^2_{k-q}/(k-q) \overset{a.s.}{\rightarrow} 1$ as $k \rightarrow \infty$ since $\mathbb{E}(\xi) \equiv 1$ and $\text{Var}(\xi) = \frac{2}{k-q} \rightarrow 0$ for $k \rightarrow \infty$. 

Therefore the confidence region $\Lambda$ from the main paper is an asymptotic $1-\alpha$ confidence region for $\bs{\beta}$.

\end{proof}

\end{document}